\documentclass[a4paper,USenglish,cleveref,thm-restate,nolineno]{socg-lipics-v2021}

\pdfoutput=1 
\hideLIPIcs  

\title{Nearly Instance Optimal Sparse Matrix Approximation from Matrix-Vector Products} 

\author{Christopher Musco}{New York University, United States}{cmusco@nyu.edu}{}{}
\author{Indu Ramesh}{New York University, United States}{ir914@nyu.edu}{}{}

\authorrunning{C. Musco and I. Ramesh} 

\Copyright{Christopher Musco and Indu Ramesh} 

\keywords{Matrix learning, sparse approximation, implicit methods} 

\ccsdesc[500]{\textcolor{red}{Theory of computation~Design and analysis of algorithms}} 
\makeatletter
\renewcommand{\subjclassHeading}{}
\renewcommand{\@ccsdescString}{}

\renewcommand{\keywordsHeading}{}
\renewcommand{\@keywords}{}
\makeatother

\relatedversion{} 

\EventEditors{John Q. Open and Joan R. Access}
\EventNoEds{2}
\EventLongTitle{42nd Conference on Very Important Topics (CVIT 2016)}
\EventShortTitle{CVIT 2016}
\EventAcronym{CVIT}
\EventYear{2016}
\EventDate{December 24--27, 2016}
\EventLocation{Little Whinging, United Kingdom}
\EventLogo{}
\SeriesVolume{42}
\ArticleNo{23}

\usepackage{multicol}
\usepackage{bbm}
\usepackage[labelfont=bf,textfont=it,font=small]{caption}
\usepackage{algorithm}
\usepackage{algorithmicx}
\usepackage[noend]{algpseudocode}
\algrenewcommand\algorithmiccomment[1]{\hfill{\color{gray}$\triangleright$~#1}}
\algrenewcommand\algorithmicrequire{\textbf{Input:}}
\algrenewcommand\algorithmicensure{\textbf{Output:}}
\newcommand{\algrule}[1][.2pt]{\par\vskip.2\baselineskip\hrule height #1\par\vskip.2\baselineskip}
\usepackage{subcaption}
\usepackage{graphicx}

\usepackage{nicematrix}
\usepackage{dsfont}

\usepackage{xcolor}
\definecolor{c0}{HTML}{641a80}
\definecolor{c1}{HTML}{b73779}

\usepackage{hyperref}
\hypersetup{
    colorlinks,
    linkcolor=c0,
    citecolor=c1,
    urlcolor=c0
}

\newcommand{\nnz}{\mathrm{nnz}}
\newcommand{\degen}{\mathrm{degen}}
\DeclareMathOperator{\argmin}{argmin}

\newcommand{\R}{\mathbb{R}}
\newcommand{\E}{\mathbb{E}}

\newcommand{\bv}[1]{\mathbf{#1}}

\newtheorem{problem}{Problem}

  \usepackage{nth}
  \usepackage{intcalc}

  \newcommand{\cAAAI}[1]{AAAI\ Conference\ on\ Artificial (AAAI)}

\nolinenumbers

\begin{document}
\maketitle

\begin{abstract}
  A large body of work studies the problem of learning an approximation to an implicit matrix $\bv{A}\in \R^{m\times n}$ that is only accessible implicitly via matrix-vector product queries (matvec queries) of the form $\bv{x} \rightarrow \bv{A}\bv{x}$ or $\bv{x} \rightarrow \bv{A}^T\bv{x}$. Of particular interest are methods that learn a near-optimal approximation with a \emph{fixed sparsity pattern}. For example, we might want to learn a  near-optimal diagonal, banded, or arrow-head approximation to an implicit matrix $\bv{A}$. 
  
  Naturally, the number of matvec queries required to solve this problem depends on the sparsity pattern, which can be encoded as a binary matrix $\bv{S}\in \{0,1\}^{m\times n}$. The query complexity of previous algorithms scales with quantities like the total number of ones in $\bv{S}$, its maximum column/row sparsity, or the chromatic number of a its ``conflict graph''. These quantities are incomparable: for a given $\bv{S}$, parameterizing by one might yield lower query complexity than another.

  In this work, we unify and tighten these prior results by providing a nearly sharp characterization of the matvec query complexity of sparse matrix approximation. Generalizing a definition from graph algorithms, let the \emph{degeneracy},  $\degen(\bv{S})$, denote the smallest number $k$ so that, if we iteratively delete all rows and columns of $\bv{S}$ with $\leq k$ ones, we are left with an empty matrix. 
  We show that a near-optimal approximation to $\bv{A}$ with sparsity pattern $\bv{S}$ can be learned with $\tilde{O}(\degen(\bv{S}))$ matrix-vector product queries, and $\Omega(\degen(\bv{S}))$ queries are necessary, for \emph{any sparsity pattern} $\bv{S}$. Moreover, unlike prior work based on graph coloring, all of our methods run in polynomial time. 
\end{abstract}

\section{Introduction}
Matrix approximation based on matrix-vector products has emerged as a fundamental primitive in numerical linear algebra and optimization \cite{AmselAviChen:2025,AmselChenKeles:2026,BakshiClarksonWoodruff:2022,ChenKelesHalikias:2025,ClarksonWoodruff:2009,HalikiasTownsend:2023,HalkoMartinssonTropp:2011,JambulapatiLiMusco:2023,LevittMartinsson:2024,LevittMartinsson:2024a,LinLuYing:2011,LinLuYing:2009,LiuXingGuo:2021,PearceYesypenkoLevitt:2025,SchaferKatzfussOwhadi:2021}. Recently, the problem has also found applications in scientific machine learning (SciML) as a powerful yet tractable special case of \emph{operator learning} \cite{BoulleHalikiasOtto:2024,BoulleHalikiasTownsend:2023,BoulleTownsend:2023,HoopKovachkiNelsen:2023,SchaferOwhadi:2024}. 

In the basic setup, we are given access to black-box matrix vector products (abbreviated ``matvecs'') with $\bv{A}$ and $\bv{A}^T$ for some unknown matrix $\bv{A} \in \R^{m\times n}$. I.e., we can evaluate $\bv{A}\bv{x}_1, \bv{A}^T\bv{y}_1, \ldots, \bv{A}\bv{x}_q, \bv{A}^T\bv{y}_q$ for adaptively chosen vectors $\bv{x}_1, \ldots, \bv{x}_q \in \R^{n}$ and $\bv{y}_1, \ldots, \bv{y}_q \in \R^{m}$. ``Adaptively'' means that the choice of $\bv{x}_i,\bv{y}_i$ can depend on all previous queries and their results. Given the results of all queries, the goal is to find a near-optimal approximation to $\bv{A}$ from some pre-specified family $\mathcal{F} \subset \R^{m\times n}$. I.e., find $\tilde{\bv{B}}\in \mathcal{F}$ satisfying\footnote{Matrix approximation is most commonly studied with error measured by the Frobenius norm, $\|\bv{A} - \bv{B}\|_F$, and this is the norm we consider. However, it is natural to study other norms, like the spectral norm \cite{MuscoMusco:2015,SimchowitzEl-AlaouiRecht:2018}. Additionally, the goal of $(1+\epsilon)$-near optimality can often be relaxed: the problem is still interesting when the approximation factor is a constant or grows mildly with $\bv{A}$'s dimensions \cite{AmselChenKeles:2026}.}:
\begin{align}
\label{eq:goal}
\|\bv{A} - \tilde{\bv{B}}\|_F \leq (1+\epsilon)\cdot \min_{\bv{B}\in \mathcal{F}}\|\bv{A} - \bv{B}\|_F.
\end{align}

Natural examples for the approximation family, $\mathcal{F}$, include low-rank matrices, hierarchically structured matrices, butterfly matrices, and more \cite{LinLuYing:2011,LiuXingGuo:2021,HalikiasTownsend:2023}. As one example result, when $\mathcal{F}$ is the class of rank-$k$ matrices, \eqref{eq:goal} can be achieved with  $O(k/\epsilon^{1/3})$ matvecs \cite{BakshiClarksonWoodruff:2022}.

In applications, we are interested in algorithms based on matvec queries because it is often far more efficient to compute matvecs than to produce $\bv{A}$ explicitly. Common settings include when $\bv{A}$ is the Hessian matrix of a high-dimensional optimization problem, with which matvecs can be computed efficiently via autodiff \cite{Pearlmutter:1994}, or when $\bv{A}$ is an integral transform that can be applied efficiently using the Fourier transform or Fast Multipole Method \cite{LevittMartinsson:2024}. Another example arises when $\bv{A}$ is the inverse of a matrix $\bv{B}$. While producing $\bv{A}$ requires matrix-inversion, and thus $O(n^3)$ time, matvecs can often be computed by solving a linear system involving $\bv{B}$, which often takes $O(n^2)$ or less when $\bv{A}$ is well-conditioned \cite{BekasKokiopoulouSaad:2007},

\subsection{Sparse Matrix Approximation}
\label{sec:prior_work}
An important special case of the matrix approximation problem is when $\mathcal{F}$ contains all matrices with some \emph{fixed sparsity pattern}.\footnote{This problem should not be confused with the closely related but distinct \emph{sparse recovery problem}, where we wish to find a good sparse approximation to $\bv{A}$, and are free to use any sparsity pattern we wish \cite{DasarathyShahBhaskar:2015}. See \cite{AmselChenKeles:2026b} for further discussion on differences between these problems.} One example is the class of diagonal matrices \cite{BekasKokiopoulouSaad:2007,BastonNakatsukasa:2022,DharangutteMusco:2023}. Diagonal matrices are widely used to approximate Hessian matrices in optimization \cite{BlondelRoulet:2024,DauphinVriesBengio:2015,MetivierBretaudeauBrossier:2014,YaoGholamiShen:2021} or inverse matrices in statistical applications \cite{ErikssonDongLee:2018,TangSaad:2011}. 
Other fixed-sparsity families of interest include block-diagonal matrices, banded matrices, and matrices whose non-zeroes correspond to a known graph (e.g., a low-dimensional mesh) \cite{ColemanMore:1983,HalikiasTownsend:2023,TangSaad:2011,VillaPetraGhattas:2021}. 

Any such family can be encoded by a binary matrix $\bv{S}\in \{0,1\}^{m\times n}$ representing the sparsity patter, where a one at position $(i,j)$ indicates that $(i,j)$ is allowed to be non-zero. The optimal Frobenius norm approximation to $\bv{A}$ with sparsity pattern $\bv{S}$ is easy to write down: it is given by $\bv{S}\circ \bv{A}$, where $\circ$ denotes the entrywise product. However, it is not typically possible to efficiently compute $\bv{S}\circ \bv{A}$ using a small number of matrix-vector products. In particular, while we can read any single column or row of $\bv{A}$ using one matrix-vector product (with a standard basis vector), $\bv{S}$'s ones could be distributed among many columns/rows.

As such, prior work has focused on \emph{approximation algorithms}. For the special case of diagonal approximation, it is known that the popular Hutchinson's estimator finds a near-optimal diagonal $\tilde{\bv{B}}$ satisfying \eqref{eq:goal} with $O(1/\epsilon)$ matvec queries, and no achieve 
algorithm achieving the same approximation guarantee can use asymptotically fewer queries \cite{BekasKokiopoulouSaad:2007,DharangutteMusco:2023}. 
Optimal query complexities are also known, e.g., for banded and block diagonal matrices.

For more complex sparse families, however, the optimal query complexity is often unknown. A number of prior results provide generic results that give query complexities depending on certain properties of $\bv{S}$. For example, if $\bv{S}$ has at most $s$ ones in any row (or column) then \cite{AmselChenKeles:2026b} shows that a $(1+\epsilon)$ error approximation can be learned using $O(s/\epsilon)$ products with random Gaussian vectors. It can also be shown that, if $\bv{S}$ has at most $\nnz(\bv{S})$ total ones, then $O(\sqrt{\nnz(\bv{S})/\epsilon})$ matvecs suffice, no matter how the ones are distributed \cite{AmselAviChen:2025,DasarathyShahBhaskar:2015}.

Another line of work studies algorithms based on graph coloring \cite{ColemanCai:1986,ColemanMore:1983,CurtisPowellReid:1974,GoldfarbToint:1984,Mccormick:1983,PowellToint:1979,TangSaad:2011}. Such methods are typically only analyzed in the easier \emph{recovery} setting, where $\bv{A}$ itself is assumed to be sparse. I.e., $\bv{S}\circ\bv{A} = \bv{A}$, and thus the optimal solution to \eqref{eq:goal} has $0$ error. In this setting, the complexity of these methods depends on the chromatic number (or related properties) of various graphs related to $\bv{S}$. For example, \cite{CurtisPowellReid:1974} observes that, if we can partition the columns of $\bv{S}$ into $k$ subsets, each consisting of columns with disjoint support, then any matrix with sparsity pattern $\bv{S}$ can be recovered with $k$ matvecs. The smallest such $k$ corresponds to the chromatic number of a ``conflict'' graph, $G$, associated with $\bv{S}$: $G$ has a vertex for each column and an edge between any two columns with overlapping support. 

Finding an optimal coloring of $G$ (to come up with a suitable set of query vectors) is NP-hard, so heuristic approximation methods are typically used instead. NP-hard optimization problems also arise in the implementation of more complex coloring-based algorithms with refined query complexities, which have been studied in work by Mccormick \cite{Mccormick:1983}, Power and Toint \cite{PowellToint:1979}, Coleman and Cai \cite{ColemanCai:1986}, and others. 

\subsection{Our Contributions}
\label{sec:contributions}
Our paper simplifies and generalizes this prior work by identifying a single combinatorial parameter of the sparsity pattern, $\bv{S}$, that, up to logarithmic factors, completely characterizes the matvec complexity of learning a near-optimal approximation with sparsity pattern $\bv{S}$. Specifically, generalizing a notion from graph theory, we define: 
\begin{definition}[Degeneracy of a Sparsity Pattern]
\label{def:degen}
Consider a matrix $\bv{S}\in \{0,1\}^{n\times m}$. For a positive integer $k$, delete all columns and rows of $\bv{S}$ with $\leq k$ ones, and then repeat this process on the smaller matrix that remains until no further columns or rows can be deleted. If we are left with an empty matrix, then we say $\bv{S}$ is $k$-degenerate. The \emph{degeneracy} of $\bv{S}$, denoted by $\degen(\bv{S})$, is the smallest value of $k$ such that $\bv{S}$ is $k$-degenerate.
\end{definition}
If $\bv{S}$ was the symmetric adjacency matrix of an undirected graph, $G$, then $\degen(\bv{S})$ would exactly correspond to the degeneracy of $G$, which is also known as the $k$-core number, width, or graph linkage \cite{KirousisThilikos:1996,LickWhite:1970}.\footnote{The degeneracy is also equivalent to $G$'s ``coloring number'' minus 1 (not to be confused with chromatic number) \cite{Eppstein:2025,ErdosHajnal:1966} and is within a 2 factor of $G$'s arboricity.} Our main algorithmic result is that $\degen(\bv{S})$ bounds the complexity of approximating an unknown matrix $\bv{A}$ with sparsity pattern $\bv{S}$:
\begin{theorem}[Upper Bound]
    \label{thm:main_upper}
Let $\bv{S}\in \{0,1\}^{n\times m}$ be a known sparsity pattern and $\bv{A}$ be an unknown matrix. There is a polynomial time algorithm that issues $O\left(\frac{\degen(\bv{S})}{\epsilon\delta} \cdot \log\left(n+m\right)\right)$ non-adaptive matrix-vector product queries with $\bv{A}$ and $\bv{A}^T$ and, with probability $1 - \delta$, returns a matrix $\tilde{\bv{B}}$ with sparsity pattern $\bv{S}$ satisfying:
\begin{align*}
\|\bv{A} - \tilde{\bv{B}}\|_F \leq (1+\epsilon)\cdot \min_{\bv{B}: \bv{B} = \bv{S}\circ\bv{B}}\|\bv{A} - \bv{B}\|_F = (1+\epsilon) \cdot \|\bv{A} - \bv{S}\circ\bv{A}\|_F.
\end{align*}
\end{theorem}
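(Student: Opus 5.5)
The natural route is to use the peeling order in Definition~\ref{def:degen} to split the support of $\bv{S}$ into at most $O(\log(n+m))$ pieces. Each piece will be "row-sparse" or "column-sparse" with sparsity $O(\degen(\bv{S}))$. We then apply the known Gaussian-sketch result for patterns with at most $s$ ones per row or per column (the $O(s/\epsilon)$ bound of \cite{AmselChenKeles:2026b}) to each piece.

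\textbf{Step 1 (decomposition).} Let $k=\degen(\bv{S})$. Run the peeling process and record the round in which each row or column is deleted. Assign each one at position $(i,j)$ to whichever of row $i$ or column $j$is deleted first; break ties arbitrarily. When a row is deleted it has at most $k$ surviving ones, so every row owns at most $k$ entries, and likewise for columns. Hence $\bv{S}=\bv{S}_R+\bv{S}_C$, where $\bv{S}_R$ has at most $k$ ones per row and $\bv{S}_C$ has at most $k$ ones per column. If this works directly, no logarithmic factor is needed in the decomposition itself. I would still group entries by peeling round (dyadically) in case the estimator requires the pieces to interact only weakly with each other, and that grouping may be where $\log(n+m)$ enters.

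\textbf{Step 2 (estimation).} We cannot learn $\bv{S}_R\circ\bv{A}$ in isolation: the sketch $\bv{A}\bv{G}$ sees all of $\bv{A}$, including the entries on $\bv{S}_C$. I would therefore solve sequentially, following the reverse peeling order, or equivalently subtract previous estimates. Consider rows deleted in the last round, and the entries they own. Estimate $\bv{S}_R\circ\bv{A}$ by row-wise least squares from $\bv{A}^T$-products: row $i$ of $\bv{A}$ restricted to its $\le k$ owned columns is recovered from $\bv{e}_i^T\bv{A}\bv{G}$ with $\bv{G}$ Gaussian $n\times O(k/\epsilon)$, treating the rest of the row as noise. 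This is the $s$-sparse result, with error $\epsilon\|\text{(rest of row)}\|^2$ per row. The difficulty is that "rest of the row" includes entries of $\bv{S}_C\circ\bv{A}$, which are not part of the optimal error $\|\bv{A}-\bv{S}\circ\bv{A}\|_F$. This step is the main obstacle.

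\textbf{Handling the obstacle.} The fix I would try is to process by peeling rounds from last to first. An entry owned by a column that was deleted later has already been estimated, and we subtract that estimate (using the sketch of the estimate, which is computable since we know it explicitly) before solving for the entries owned by row $i$. The residual noise is then the true off-pattern mass plus accumulated estimation errors. These errors are an $\epsilon$-fraction of the off-pattern mass, but they compound across rounds. To avoid blowing up by $(1+\epsilon)^{\#\text{rounds}}$, I would use fresh independent sketches of size $O(k\log(n+m)/(\epsilon\delta))$ and bound the error in expectation, then apply Markov, which gives the $1/\delta$ factor. The key point is that the estimation error from the Gaussian least-squares step is independent of the fresh sketch in the next round, so it contributes only an additive $\epsilon$-fraction rather than a multiplicative one.

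The bound on the number of rounds needs care: in the worst case peeling can take many rounds. I would instead group rounds so that there are $O(\log(n+m))$ phases, for instance by halving the surviving size, with the number of surviving rows and columns at least halving each phase. I would argue that within a phase the row pieces and column pieces can be solved jointly, since they involve disjoint ownership. Summing the expected errors and applying Markov then gives the stated $O(\degen(\bv{S})\log(n+m)/(\epsilon\delta))$ non-adaptive queries. Everything here is least squares plus peeling, so the algorithm runs in polynomial time.
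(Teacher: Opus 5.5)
There is a genuine gap in the step meant to reduce the number of rounds to $O(\log(n+m))$. You propose grouping consecutive rounds of the threshold-$\degen(\bv{S})$ peeling into halving phases, but these phases are not pieces that one sketch per phase can handle. A row deleted late in a phase can have arbitrarily many ones at the start of that phase, owned by columns deleted earlier in the same phase. For example, take a row whose columns hang off paths of lengths $1,2,\ldots,N$, and add a much longer path elsewhere so the phase does not end; this pattern has degeneracy $1$. The entries owned by those columns are neither off-pattern mass nor removable:
\begin{itemize}
\item Subtracting their estimates makes the noise in the row's least-squares solve depend, through the alternating row/column estimates, on the very sketch used for that solve. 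Then \Cref{lem:sketch} no longer applies.
\item Masking one shared sketch cannot hide a different set of columns for each round of the phase.
\item Disjoint ownership does not decouple the equations for a joint solve, and you give no error bound for such a solve.
\end{itemize}
Your processing order is also reversed. Under your ownership rule, the on-pattern entries of row $i$ that it does not own belong to columns deleted \emph{earlier}, so they must be handled before row $i$. Going last-to-first leaves exactly these entries unresolved, whereas first-to-last makes the first round clean, as in \Cref{alg:recovery}.

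The missing idea is to raise the peeling threshold instead of grouping rounds. Let $d=\degen(\bv{S})$; in each round, peel every row, then every remaining column, with at most $4d$ surviving ones. Every surviving submatrix is still $d$-degenerate, so it has at most $d(m'+n')$ ones. Hence fewer than $(m'+n')/4$ of its rows and fewer than $(m'+n')/4$ of its columns exceed $4d$ ones, so $m'+n'$ at least halves each round (\Cref{lem:iteration_bound}). Every row or column processed in a round then has at most $4d$ unknowns at that moment. So one fresh pair of Gaussian sketches with $q=O(d/(\epsilon\delta))$ columns per round suffices.

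The paper also avoids subtraction. It zeroes the round-$t$ query vectors on the coordinates of previously processed columns (for $\bv{A}\bv{Z}^t$) and rows (for $\bv{A}^T\bv{W}^t$). This masking depends only on $\bv{S}$, so the queries stay non-adaptive. It also makes the noise in each solve a subset of that row's or column's off-pattern mass. Summing \Cref{lem:sketch} over rows and columns gives expected error at most $\epsilon\delta\|\bv{A}-\bv{S}\circ\bv{A}\|_F^2$, and Markov finishes.

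The compounding you worry about does not occur even with subtraction and fresh sketches. Each estimated entry's error enters as noise in at most one later solve, so errors accumulate geometrically in $\epsilon$, not in the number of rounds. Conversely, sketches of size $O(k\log(n+m)/(\epsilon\delta))$ in each of $O(\log(n+m))$ phases would total $O(k\log^2(n+m)/(\epsilon\delta))$, not the stated bound.
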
 
Notably, unlike coloring methods, the set of queries issued by our algorithm can be computed in time polynomial in $n$ and $m$, as can the approximation $\tilde{\bv{B}}$. We also note that, in the easier ``recovery'' setting, where $\bv{S}\circ\bv{A} = \bv{A}$, our algorithm succeeds with exactly $2\cdot \degen(\bv{S})$ queries, and those queries can be chosen to simply be random Gaussian vectors.

We complement \Cref{thm:main_upper} with a  nearly matching lower bound that holds even for adaptive methods (\Cref{thm:main_upper} can be achieved with queries that are non-adaptively chosen): 
\begin{theorem}[Lower Bound]
    \label{thm:main_lower}
For any $\delta < 1$ and any finite $C$, no algorithm which queries less than $\degen(\bv{S})$ matvecs with $\bv{A}$ or $\bv{A}^T$ can return, with probability $1-\delta$, a matrix $\tilde{\bv{B}}$ satisfying $\|\bv{A} - \tilde{\bv{B}}\|_F \leq C\cdot \min_{\bv{B}: \bv{B} = \bv{S}\circ\bv{B}} \|\bv{A} - {\bv{B}}\|_F$.
\end{theorem}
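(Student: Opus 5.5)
Since we need approximation factor $C$ for every finite $C$, the approach is to look at the recovery regime. We exhibit a matrix $\bv{A}$ with $\bv{S}\circ\bv{A}=\bv{A}$, so the optimum is $0$. On such an instance any algorithm meeting the guarantee must output $\tilde{\bv{B}}=\bv{A}$ exactly. It then suffices to show that fewer than $k=\degen(\bv{S})$ matvecs cannot identify $\bv{A}$ with probability $1-\delta>0$, even adaptively.

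\textbf{Combinatorial core.} By \Cref{def:degen}, $\bv{S}$ is not $(k-1)$-degenerate. So iteratively deleting rows and columns with $\le k-1$ ones leaves a nonempty submatrix, indexed by rows $R$ and columns $T$. In $\bv{S}_{R,T}$, every row and every column has at least $k$ ones. We restrict the hard distribution to matrices supported on $\bv{S}_{R,T}$: set $\bv{A}_{ij}$ to independent Gaussians for $(i,j)\in R\times T$ with $\bv{S}_{ij}=1$, and zero elsewhere. Any query $\bv{x}\to\bv{A}\bv{x}$ (or $\bv{y}\to\bv{A}^T\bv{y}$) then reveals only linear functionals of the vector $\bv{a}$ of unknown entries. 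Let $N=\nnz(\bv{S}_{R,T})$.

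\textbf{Counting argument.} After $q<k$ adaptive queries, the observations are $q$ blocks of linear measurements of $\bv{a}$. We claim they leave a nontrivial affine subspace of consistent matrices. Project each query onto coordinates $T$ (resp.\ $R$), giving $\bv{x}_1,\dots,\bv{x}_{q_1}\in\R^{T}$ and $\bv{y}_1,\dots,\bv{y}_{q_2}\in\R^{R}$ with $q_1+q_2<k$. The key step is to construct a nonzero $\bv{\Delta}$ supported on $\bv{S}_{R,T}$ with $\bv{\Delta}\bv{x}_\ell=0$ and $\bv{\Delta}^T\bv{y}_\ell=0$ for all $\ell$. We count dimensions. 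The constraints $\bv{\Delta}\bv{x}_\ell=0$ impose $|R|$ equations each, and $\bv{\Delta}^T\bv{y}_\ell=0$ impose $|T|$ each. That gives at most $q_1|R|+q_2|T|$ constraints against $N$ unknowns. We have $N\ge k|R|$ and $N\ge k|T|$, hence $N\ge k\cdot\max(|R|,|T|)>(q_1+q_2)\max(|R|,|T|)\ge q_1|R|+q_2|T|$. So a nonzero $\bv{\Delta}$ exists. This counting step is the crux, and it is exactly where the minimum-degree property of the core is used.

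\textbf{Finishing with adaptivity.} The constraint subspace is determined by the queries. Adaptively, each query is a function of previous answers, so we argue conditionally. Given the full transcript, the posterior of $\bv{a}$ is Gaussian restricted to an affine subspace of dimension $\ge N-(\text{constraints})\ge1$. Note that the transcript's constraints on $\bv{a}$ are linear in $\bv{a}$ once the queries are fixed, so the conditional is a nondegenerate Gaussian on that affine set. This posterior is continuous, so any output $\tilde{\bv{B}}$ (possibly randomized) equals $\bv{A}$ with conditional probability $0$. Hence the success probability is $0<1-\delta$. Since the error of a wrong output is positive while the optimum is $0$, no finite $C$ helps.

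The main obstacle I expect is making the adaptive conditioning rigorous. I would handle it by a chain-rule argument: at each step, conditioned on the past, the new query is deterministic. The set of $\bv{a}$ consistent with the transcript is an affine space with a Gaussian conditional law, and dimension counting applies to the final transcript.
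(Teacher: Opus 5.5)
Your proposal is correct and takes essentially the paper's route (the proof of \Cref{lem:recovery_lb}). You extract the core submatrix in which every row and column has at least $\degen(\bv{S})$ ones, compare the at most $\max(|R|,|T|)$ linear equations per query against the at least $\degen(\bv{S})\cdot\max(|R|,|T|)$ unknowns, and note that in the recovery regime the optimum is $0$, so no finite $C$ helps. Your Gaussian-prior conditioning argument is a welcome addition: it makes rigorous the adaptivity, randomization, and ``positive success probability'' aspects that the paper covers only with the phrase ``underdetermined linear system.''
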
 
The proof of \Cref{thm:main_lower} is simple, following from the fact that any sparsity pattern $\bv{S}$ with degeneracy $\degen(\bv{S})$ must contain a submatrix with at least $\degen(\bv{S})$ ones in every row and column. It can be shown that, in the recovery setting where $\bv{A} = \bv{S}\circ\bv{A}$, recoverying even this submatrix in $\bv{A}$ requires $\geq \degen(\bv{S})$ matrix-vector product queries. In the recovery setting, $\min_{\bv{B}: \bv{B} = \bv{S}\circ\bv{B}} \|\bv{A} - {\bv{B}}\|_F = 0$, so a lower bound for recovery implies a lower bound for approximation for any finite $C$. 
This proof is detailed further in \Cref{sec:recovery}. 

\begin{remark}
    As mentioned, in the recovery setting, we can learn any $\bv{A}$
    with sparsity pattern $\bv{S}$ with \emph{exactly} $2\cdot \degen(\bv{S})$ matvecs (see \Cref{lem:recover_ub}). \Cref{thm:main_lower} thus implies that our recovery algorithm is essentially ``instance optimal'': no method can improve on its complexity by more than a factor of $2$ for \emph{any sparsity pattern}. At the same time, it avoids the computational intractability of, e.g., solving a graph coloring problem. We highlight some example sparsity patterns in \Cref{tab:complexity_measures} and \Cref{fig:sparsity_patterns} to compare our results to prior prior bounds.
\end{remark}

\begin{remark}
    For the approximation setting, there is a gap of $O\left(\log(m+n)/\epsilon\right)$ between the upper and lower bounds of \Cref{thm:main_upper} and \Cref{thm:main_lower}. We believe the $\log(m+n)$ term can possibly be eliminated from the upper bound. However, the $O(1/\epsilon)$ term is inherent. There are sparsity patterns for which $\Omega(\degen(\bv{S})/\epsilon)$ matvecs are necessary to learn a $(1+\epsilon)$-approximation. E.g., see \cite{AmselChenKeles:2026b} for a lower bound of $\Omega(s/\epsilon)$ for $s$-banded matrices.
    On the other hand, there are sparsity patterns where $O(\degen(\bv{S}))$ matvecs suffice. A simple example is when $\bv{S}$ has only a single non-zero column, $i$. Such a pattern had degeneracy $1$ and an exactly optimal approximation with the sparsity pattern (i.e., with $\epsilon = 0$) can be learned withe one matvec: simply read $\bv{A}$'s $i^\text{th}$ column by multiplying with a standard basis vector.
\end{remark}

\begin{table}[t]
\begin{center}
\begin{tabular}{l|c|c|c}
& \multicolumn{3}{c}{\textbf{Sparse Family}}\\
\textbf{Query Complexity Upper Bound} & {$s$-banded} & {$s$-modular} & {arrowhead}\\
\hline
Maximum Row Sparsity \cite{AmselChenKeles:2026b} & $O(s)$ & $O(s)$ & $n$ \\
$\sqrt{\text{Total Sparsity}}$ \cite{AmselAviChen:2025} & $O(\sqrt{ns})$ & $O(\sqrt{ns})$ & $O(\sqrt{n})$ \\
Conflict Graph Chromatic Number \cite{CurtisPowellReid:1974} & $O(s)$ & $O(s^2)$ & $n$\\
\hline
Degeneracy (this work) & $O(s)$ & $O(s)$ & $O(1)$\\
\end{tabular}
\end{center}
\caption{We give examples of natural sparse matrix families, along with various results on the query complexity of recovering matrices from these families. ``s-modular'' is a sparse family introduced in Section 4.3 of \cite{AmselChenKeles:2026b}. As we can see, our results improve on all prior work, and our \Cref{thm:main_lower} confirms that no further improvement is possible, for any family. Even for the harder approximation problem, any improvements are necessarily limited to the logarithmic factor and dependence on $\epsilon$.
}.
\label{tab:complexity_measures}
\end{table}

\begin{figure}
  \centering
  \tabcolsep=0.2\linewidth
  \divide\tabcolsep by 8
  \begin{tabular}{ccc}
     \begin{subfigure}[b]{0.2\textwidth}
        \includegraphics[width=\textwidth]{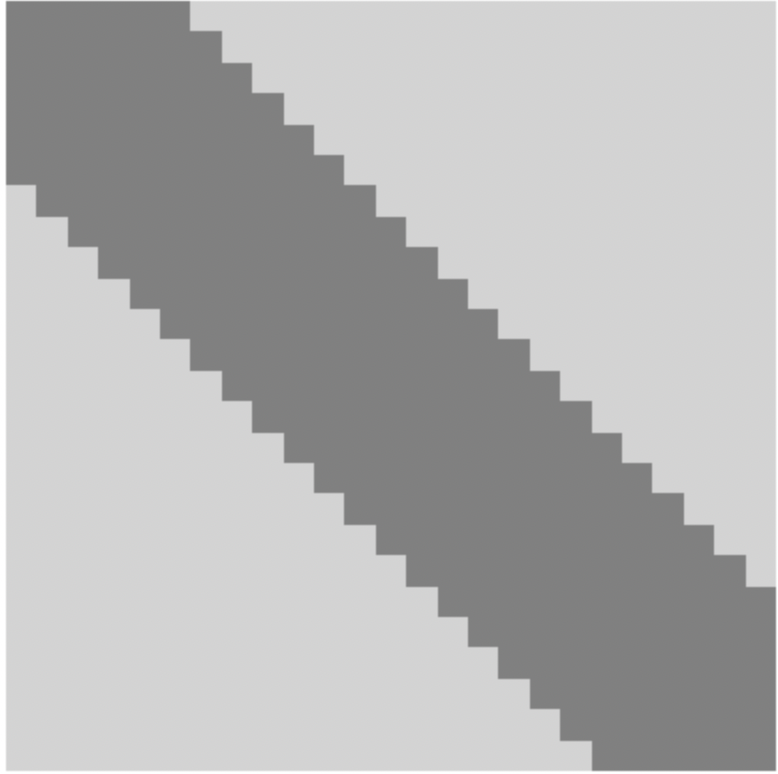}
        \end{subfigure} 
        \hspace{4em}
    \begin{subfigure}[b]{0.2\textwidth}
        \includegraphics[width=\textwidth]{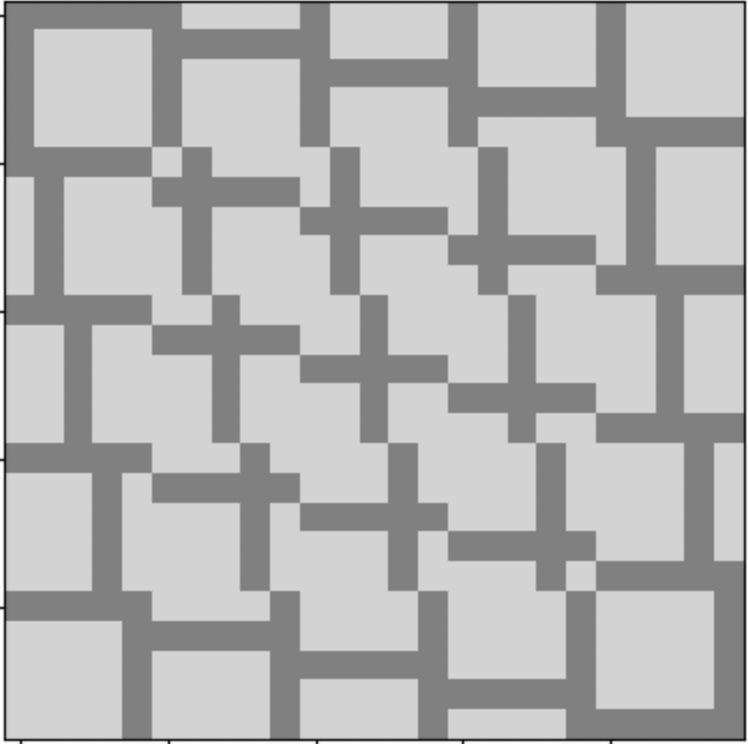}
    \end{subfigure} 
        \hspace{4em}
    \begin{subfigure}[b]{0.2\textwidth}
        \includegraphics[width=\textwidth]{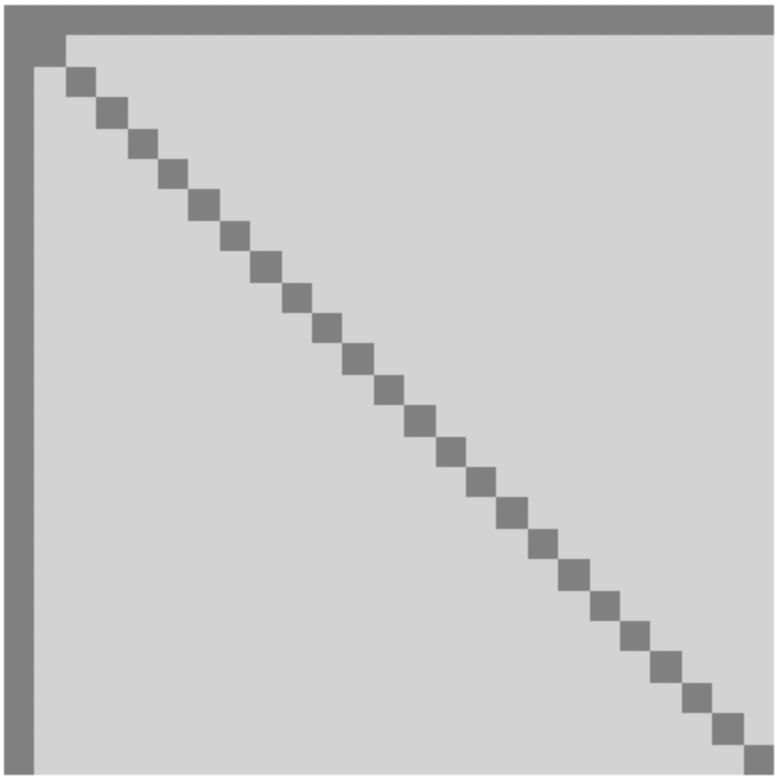}
    \end{subfigure} 
  \end{tabular}
  \caption{Illustrations of the $s$-banded, $s$-modular, and arrowhead sparsity patterns from \Cref{tab:complexity_measures} for 25x25 matrices with $s = 5$. Dark gray represents non-zeros, and light gray represents zeros.}
  \label{fig:sparsity_patterns}
\end{figure}

\subsection{Paper Structure}
We briefly review definitions and notation used throughout the paper in \Cref{sec:prelims}. In \Cref{sec:recovery}, we prove an upper bound of $\degen(\bv{S})$ matvecs and matching lower bound for recovering a matrix $\bv{A}$ with sparsity pattern $\bv{S}$. As discussed, this lower bound immediately implies \Cref{thm:main_lower}. In \Cref{sec:approximation}, we conclude by proving our main upper bound, \Cref{thm:main_upper}.

\section{Preliminaries}
\label{sec:prelims}

\subsection{Notation}
\label{sec:notation}
\noindent\textbf{Vectors, Matrices, and Norms.} We use bold upper case letters to denote matrices and bold lower case letters to denote vectors. For a vector $\bv{x}$, $x_i$ denotes the $i^\text{th}$ entry. 
For a matrix $\bv{M}$, $M_{i,j}$ denotes the $i,j$ entry. $\nnz(\bv{M})$ denote the number of non-zero entries in $\bv{M}$.
For matrices $\bv{A}$ and $\bv{B}$ of like dimensions, $\bv{A}\circ \bv{B}$ is the {entrywise product}. I.e., if $\bv{C} = \bv{A}\circ \bv{B}$, then $C_{i,j} = A_{i,j}\cdot B_{i,j}$.
For a length $n$ vector, $\|\bv{x}\|_2 = \left(\sum_{i=1}^n x_i^2\right)^{1/2}$ denotes the Euclidean norm and, for an $n\times m$ matrix, $\|\bv{M}\|_F = \left(\sum_{i=1}^n\sum_{j=1}^m M_{i,j}^2\right)^{1/2}$ denotes the Frobenius norm. We let $\bv{M}^+ \in \R^{m\times n}$ denote the Moore–Penrose pseudoinverse of a matrix $\bv{M}\in \R^{n\times m}$. When $\bv{M}$ has full column rank (as will always be the case in this paper), $\bv{M}^+ = (\bv{M}^T\bv{M})^{-1}\bv{M}^T$ and we have that $\bv{M}^+\bv{y} = \argmin_{\bv{x}\in \R^{m}} \|\bv{M}\bv{x} - \bv{y}\|_2$ for any $\bv{y}\in \R^{n}$. We let $\bv{0}$ denote the all zeros matrix. The dimension will be clear from context. 

\medskip
\noindent\textbf{Slice Notation.} We will also use matrix ``slice notation'' following standard conventions. For example, $\bv{M}_{i,:}$ denotes the $i^\text{th}$ row of $\bv{M}$ and $\bv{M}_{:,j}$ denotes the $j^\text{th}$ column. We always treat these vectors as column vectors (i.e., with width $1$). Suppose $\bv{M}$ has $n$ rows and $m$ columns and $\mathcal{S}$ and $\mathcal{R}$ are, respectively, ordered subsets of $\{1, \ldots, n\}$ and $\{1, \ldots, m\}$.  Then $\bv{M}_{\mathcal{S},\mathcal{R}}$ denotes the $|\mathcal{S}|\times |\mathcal{R}|$ matrix consisting of all entries $M_{i,j}$ for which $i \in \mathcal{S}, j\in \mathcal{R}$. Equivalently, let $\bv{s}\in \{0,1\}^n$ be a set indicator vector with $s_i = 1$ for $i \in \mathcal{S}$, $s_i = 0$ for $i\notin \mathcal{S}$, and let $\bv{r}$ be defined analogously for $\mathcal{R}$. Then we will also allow for $\bv{M}_{\mathcal{S},\mathcal{R}}$ to be written as $\bv{M}_{\bv{s},\bv{r}}$. $\bv{M}_{\mathcal{S},:} = \bv{M}_{\bv{s},:}$ denotes the matrix obtained by setting $\mathcal{R} = \{1, \ldots, m\}$ to be the complete set. For a binary vector, we overload notation and let $|\bv{s}|$ denote $|\mathcal{S}| = \nnz(\bv{s})$.

\medskip
\noindent\textbf{Index Restriction and Extension.}
Let $\bv{I}$ denote an $n\times n$ identity matrix and consider a set $\mathcal{S}\subseteq \{1,\ldots, n\}$, or equivalently, an indicator vector $\bv{s}\in \{0,1\}^n$ with $\bv{s}_i = 1$ for $i\in \mathcal{S}$. We will let $\bv{I}_{\mathcal{S}}$, or equivalently $\bv{I}_{\bv{s}}$, denote $\bv{I}_{\mathcal{S}} = \bv{I}_{\mathcal{S},:}$. Observe that, for any vector $\bv{x}\in \R^{n}$, $\bv{I}_{\mathcal{S}}\bv{x} = \bv{I}_{\bv{s}}\bv{x}$ performs a \emph{restriction operation}, returning a length $|\mathcal{S}|$ vector whose entries correspond to $x_i$ for $i\in \mathcal{S}$. Similarly, for any length $|\mathcal{S}|$ vector $\bv{y}$, $\bv{I}_{\mathcal{S}}^T \bv{y} = \bv{I}_{\bv{s}}^T \bv{y}$ performs an \emph{extension operation}, returning a length $n$ vector $\bv{x}$ with $\bv{x}_{s_i} = y_i$ for $\{s_1, \ldots, s_{|\mathcal{S}|}\} = \mathcal{S}$.
 
\subsection{Problem Formulation}
We study algorithms that can only access an unknown target matrix $\bv{A}\in \R^{n\times m}$ through left and right matrix-vector products (i.e., ``matvec queries'').  We formalize this model as follows:

\begin{definition}[Matrix-Vector Product Query Model]
An algorithm in this model can issue queries of the form $(\bv{x},s)$, where $\bv{x}$ is a vector and $s \in \{L,R\}$. If $s = R$ than the algorithm receives query response $\bv{A}\bv{x}$ for a fixed, unknown matrix $\bv{A}\in \R^{n\times m}$. Otherwise, if $s = L$, the algorithm receives $\bv{A}^T\bv{x}$. A $q$-query algorithm is one that issues $q$ queries, $(\bv{x}_1,s_1), \ldots, (\bv{x}_q,s_q)$, where the choice of $(\bv{x}_i,s_i)$ can depend on the outcome of all previous $i-1$ queries. 
\end{definition}
In the numerical analysis literature, methods implemented in the matvec query model are often referred to as ``matrix-free'' or ``implicit'' methods. In this work we consider solving two closely related problems in the matrix-vector product query model:
\begin{problem}[Sparse Matrix Recovery]
\label{prob:recovery}
Given a sparsity pattern $\bv{S}\in \{0,1\}^{n \times m}$ and query access to a matrix $\bv{A}$ where $\bv{A} \circ \bv{S} = \bv{A}$ (i.e., $\bv{A}$ has sparsity pattern $\bv{S}$), recover $\bv{A}$.
\end{problem}

\begin{problem}[Sparse Matrix Approximation]
\label{prob:approximation}
Given a sparsity pattern $\bv{S}\in \{0,1\}^{n \times m}$ and query access to an arbitrary matrix $\bv{A}$, find $\tilde{\bv{B}}$ such that $\tilde{\bv{B}} \circ \bv{S} = \tilde{\bv{B}}$ that satisfies:
\begin{align*}
\|\bv{A} - \tilde{\bv{B}}\|_F \leq (1+\epsilon) \min_{\bv{B}: \bv{B}\circ\bv{S} = \bv{B}}\|\bv{A} - {\bv{B}}\|_F.
\end{align*}
Note that we always have $\min_{\bv{B}: \bv{B}\circ\bv{S} = \bv{B}}\|\bv{A} - {\bv{B}}\|_F = \|\bv{A} - \bv{A}\circ \bv{S}\|_F$.
\end{problem}
Observe that \cref{prob:approximation} is strictly harder than \cref{prob:recovery}: if we have an algorithm that solves  \cref{prob:approximation} for any finite $\epsilon$, then when run on any matrix $\bv{A}$ with sparsity pattern $\bv{S}$, it must return $\bv{A}$ exactly to achieve multiplicative $(1+\epsilon)$ error (since $\min_{\bv{B}: \bv{B}\circ\bv{S} = \bv{B}}\|\bv{A} - {\bv{B}}\|_F = 0$). 

When evaluating algorithms for these problems in the matvec query model, we are primarily interested in minimizing matvec query complexity. This is motivated by the fact that, in many applications, computing $\bv{A}\bv{x}$ or $\bv{A}^T\bv{x}$ is a computational bottleneck, so query complexity closely correlates with computational cost. Indeed, there has been significant recent interest in proving tight upper and lower bounds on the matvec query complexity of natural linear algebraic problems like trace estimation, linear system solving, eigenvalue computation, and more \cite{BravermanHazanSimchowitz:2020,BravermanKrishnanMusco:2022,ChenTrogdonUbaru:2021,ChewiDios-PontLi:2024,DerezinskiEpperlyMeyer:2026,MeyerMuscoMusco:2021,MeyerSwartworthWoodruff:2025,MuscoMuscoRosenblatt:2025,SimchowitzEl-AlaouiRecht:2018,SunWoodruffYang:2019,WoodruffZhangZhang:2022}.

That said, other facets of an algorithm implemented in the matvec query model are also of interest. For example, ideally we should be able to efficiently compute the query vectors, $\bv{x}_1, \ldots, \bv{x}_q$, and we should also be able to efficiently construct $\tilde{\bv{B}}$ from the obtained responses. All of the algorithms in this paper run in time polynomial in the matrix dimensions $n$ and $m$.

There is also interest in algorithm that require limited ``adaptivity'', meaning that the queries can be chosen all at once (``non-adaptively'') or in a small number of rounds. Low adaptivity algorithms allow for increase parallelization query computation, which can lead to speedups in practice. All algorithms considered in this paper are fully non-adaptive. 


\section{Lower and Upper Bound for Sparse Matrix Recovery}
\label{sec:recovery}
We begin by observing that, for any sparsity pattern $\bv{S}$, $\degen(\bv{S})$ lower bounds the matvec complexity of recovering a matrix $\bv{A}$ with sparsity pattern $\bv{S}$. As discussed in \Cref{sec:contributions}, our main lower bound in \Cref{thm:main_lower} follows as an immediate corollary.
\begin{lemma}
    \label{lem:recovery_lb}
For any sparsity pattern $\bv{S}\in \{0,1\}^{n\times m}$, no algorithm can solve \Cref{prob:recovery} with any positive success probability using $< \degen(\bv{S})$ matvec queries.
\end{lemma}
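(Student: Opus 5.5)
The argument has two parts: a combinatorial core extracted from the degeneracy, and a counting argument showing that fewer than $k$ matvecs cannot pin down a matrix supported on that core.

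\textbf{Step 1: extract a dense core.} Let $k = \degen(\bv{S})$. Since $\bv{S}$ is not $(k-1)$-degenerate, running the deletion process of \Cref{def:degen} with threshold $k-1$ terminates at a nonempty submatrix $\bv{S}_{\mathcal{R},\mathcal{C}}$ in which every row and every column has at least $k$ ones. We restrict attention to matrices $\bv{A}$ supported on the ones of $\bv{S}$ inside $\mathcal{R}\times\mathcal{C}$. Any algorithm solving \Cref{prob:recovery} must in particular recover these. Queries to the full $\bv{A}$ reduce to queries to the core $\bv{A}_{\mathcal{R},\mathcal{C}}$: $\bv{A}\bv{x}$ depends only on $\bv{x}_{\mathcal{C}}$ and is zero outside $\mathcal{R}$, and symmetrically for $\bv{A}^T\bv{y}$. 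So we may assume $\bv{S}$ itself has at least $k$ ones in every row and column.

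\textbf{Step 2: count information against degrees of freedom.} Suppose $q<k$ queries are issued, adaptively. Let $\bv{x}_1,\dots,\bv{x}_a$ be the right queries and $\bv{y}_1,\dots,\bv{y}_b$ the left queries, with $a+b=q$. Fix any run of the algorithm. We want to show there exist two distinct $\bv{S}$-patterned matrices $\bv{A}\neq\bv{A}'$ that produce identical responses, so the output is wrong on at least one of them.

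Equivalently, we want a nonzero $\bv{E}$ with $\bv{E}=\bv{S}\circ\bv{E}$, $\bv{E}\bv{x}_i=0$ for all $i$, and $\bv{E}^T\bv{y}_j=0$ for all $j$. A simple dimension count over the whole matrix fails: there are $\nnz(\bv{S})$ unknowns but up to $(a+b)\max(n,m)$ constraints. We therefore work locally. Choose a single row $r$ of the core. Its support $T_r$ has $|T_r|\ge k$. Consider $\bv{E}$ supported only in row $r$, i.e.\ $\bv{E}=\bv{e}_r\bv{z}^T$ with $\operatorname{supp}(\bv{z})\subseteq T_r$. The constraints are:
\begin{itemize}
\item $\bv{z}^T\bv{x}_i=0$ for each right query, giving $a$ constraints;
\item $(\bv{y}_j)_r\,\bv{z}=0$ for each left query.
\end{itemize}
The second family kills $\bv{z}$ whenever some $(\bv{y}_j)_r\ne0$. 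This is the main obstacle: left queries touch every row.

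\textbf{Step 3: the key fix, a mixed-support perturbation.} Build $\bv{E}$ supported on the whole core instead. Write $\bv{E}\bv{X}=\bv{0}$ and $\bv{Y}^T\bv{E}=\bv{0}$ with $\bv{X}\in\R^{m\times a}$ and $\bv{Y}\in\R^{n\times b}$. Choose bases so that, after invertible row and column changes, $\bv{Y}$ and $\bv{X}$ have pivot rows $P$ (with $|P|=b$) and pivot columns $Q$ (with $|Q|=a$). The constraints then say that $\bv{E}$ is determined by its entries outside rows $P$ and columns $Q$, via linear formulas. The true difficulty is that these changes of basis destroy the sparsity pattern.

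To handle this, assume (up to measure-zero perturbation) that the query vectors are generic, and pick $P$ and $Q$ greedily. Since some row $r\notin P$ has at least $k>a+b$ ones, and we may choose $P$ among rows where $\bv{Y}$'s submatrix is invertible, a free entry of row $r$ survives. The subtle point is showing that the determined entries land inside $\bv{S}$. If needed, I would instead argue via contradiction: the linear map $\bv{E}\mapsto(\bv{E}\bv{X},\bv{Y}^T\bv{E})$ restricted to $\bv{S}$-patterned matrices is injective only if its transpose is surjective. A dual argument then shows $\bv{S}$ would have a row or column with fewer than $q<k$ ones, a contradiction.

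Finally, adaptivity is handled in the standard way: run the algorithm on $\bv{A}=\bv{0}$, which fixes the queries, then apply the argument to obtain $\bv{A}'=\bv{E}\ne\bv{0}$ with identical responses. Since all responses agree, the algorithm's (possibly randomized) output is the same on both inputs, so it can succeed on at most one of them.
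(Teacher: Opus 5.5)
Your Step 1 (passing to a core $\bv{S}_{\mathcal{R},\mathcal{C}}$ with at least $k$ ones in every row and column) and your run-on-$\bv{0}$ treatment of adaptivity are fine, and Step 1 is exactly how the paper begins. But there is a genuine gap at the central step: you never produce a nonzero $\bv{E}=\bv{S}\circ\bv{E}$ with $\bv{E}\bv{X}=\bv{0}$ and $\bv{Y}^T\bv{E}=\bv{0}$. Step 2 fails by your own account. Step 3 stops at the hard part: ``showing that the determined entries land inside $\bv{S}$'' is left unresolved. The fallback ``dual argument'' is only asserted. The appeal to generic queries would also need justification, since in a lower bound the algorithm, not you, picks the queries. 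You would have to argue that the rank of the constraint map can only drop at non-generic queries.

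The missing idea is that the global dimension count you dismissed does work once you are on the core, and it is exactly the paper's proof. After your reduction, each of the $n$ rows has at least $k$ ones, so $\nnz(\bv{S})\ge kn$. Each of the $m$ columns has at least $k$ ones, so $\nnz(\bv{S})\ge km$. Hence
\[ \nnz(\bv{S}) \;\ge\; k\max(n,m) \;>\; (a+b)\max(n,m) \;\ge\; an+bm . \]
The $a$ right queries and $b$ left queries impose at most $an+bm$ scalar homogeneous linear constraints on the $\nnz(\bv{S})$ free entries of $\bv{E}$. So the kernel is nontrivial for \emph{every} choice of query vectors, and no local analysis, pivoting, genericity or duality is needed. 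Your claim that this count ``fails'' overlooked that the column-degree bound, not just the row-degree bound, controls the $\max(n,m)$ factor. Your proposed dual argument in fact reduces to the same count: injectivity would force $\nnz(\bv{S})\le an+bm\le q\max(n,m)$, so some row or column would have at most $q<k$ ones. Plugging this count into your adaptivity argument completes the proof.
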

\begin{proof}
    We first observe that, $\bv{S}$ must contain a submatrix with at least $\degen(\bv{S})$ ones in every row and column. This is because, if all submatrices had a row or column with $\leq  \degen(\bv{S})$ ones, then by definition, the matrix must have degeneracy $< \degen(\bv{S})$.

    Let $\bar{\bv{S}}\in \{0,1\}^{m'\times n'}$ denote this submatrix, where $m' \leq m$ and $n'\leq n$, and let $\ell' = \max(m',n')$. $\bar{\bv{S}}$ contains at least $\ell' \cdot \degen(\bv{S})$ ones. Now, consider the easier problem of recoverying a matrix $\bv{A}$ that is only non-zero on the non-zero entries of $\bar{\bv{S}}$. Every multiplication by $\bv{A}$ on the right by a vector $\bv{x}$ yields a vector with at most $m'$ non-zero entries, each of with is a linear combination of a subset of entries in $\bv{A}$ (those contained in a single row). Likewise, every multiplication by $\bv{A}$ on the left by a vector $\bv{x}$ yields a vector with at most $n'$ non-zero entries, each of which is a linear combination of a subset of entries in $\bv{A}$.

    Accordingly, no matter how our query vectors are chosen, if we issue $q$ matvec queries, we obtain \emph{at most} $q\cdot\ell'$ linear equations in the unknown entries of $\bv{A}$. It is impossible to determine these entries unless $q\cdot \ell' \geq \nnz(\bv{A}) \geq \ell' \cdot \degen(\bv{S})$ (since we have an underdetermined linear system \cite{HalikiasTownsend:2023}). In other words, we must have $q \geq \degen(\bv{S})$ to determine $\bv{A}$. 
\end{proof}

With \Cref{lem:recovery_lb} in place, it is natural to ask how close we can get to the limit of $\degen(\bv{S})$ matvec queries for solving \Cref{prob:recovery}. Surprisingly, we show that a simple, non-adaptive, polynomial time algorithm solves the problem with $2\cdot \degen(\bv{S})$ matvec queries. This is despite substantial work on more complex coloring-based algorithms for \Cref{prob:recovery}, which require solving NP-hard optimization problems to select query vectors, as discussed in \Cref{sec:prior_work}.

\begin{lemma}[Upper Bound]
\label{lem:recover_ub}
For any sparsity pattern $\bv{S}\in \{0,1\}^{n\times m}$, \Cref{prob:recovery} can be solved with probability 1 using $2\cdot  \degen(\bv{S})$ matvec queries (specifically, the product of each of $\bv{A}$ and $\bv{A}^T$ with $\degen(\bv{S})$ i.i.d. random Gaussian vectors).
\end{lemma}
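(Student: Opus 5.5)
Let $k = \degen(\bv{S})$, and let $\bv{X}\in\R^{m\times k}$ and $\bv{Y}\in\R^{n\times k}$ have i.i.d. standard Gaussian entries. We observe $\bv{A}\bv{X}$ and $\bv{A}^T\bv{Y}$. The plan is to peel off rows and columns in the order given by the degeneracy elimination of \Cref{def:degen}, and to recover each eliminated row or column by a small least-squares solve.

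First we fix the elimination order. Run the deletion process of \Cref{def:degen} with threshold $k$, deleting one row or column at a time; deleting in a batch is equivalent to deleting sequentially, since removing lines only lowers counts. This gives an ordering $\ell_1,\ell_2,\ldots,\ell_{n+m}$ of all rows and columns. When $\ell_t$ is deleted, it has at most $k$ ones among the lines not yet deleted, i.e.\ among the later lines $\ell_{t+1},\ldots$. Computing this order takes polynomial time.

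Next we recover the lines in \emph{reverse} order, $\ell_{n+m},\ell_{n+m-1},\ldots,\ell_1$. Suppose $\ell_t$ is row $i$, and all lines later than $\ell_t$ are already fully known. The nonzero entries of row $i$ lie in the column set $\mathcal{R}=\{j: S_{i,j}=1\}$. Split $\mathcal{R}=\mathcal{R}_{\text{old}}\cup\mathcal{R}_{\text{new}}$:
\begin{itemize}
\item $\mathcal{R}_{\text{old}}$ consists of columns deleted before row $i$. Such a column appears earlier in the order, so it is not yet known.
\item $\mathcal{R}_{\text{new}}$ consists of columns deleted after row $i$. These are already recovered, so the entries $A_{i,j}$ for $j\in\mathcal{R}_{\text{new}}$ are known.
\end{itemize}
Wait: this gives the wrong split. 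The entries we can bound are the ones in the \emph{later} lines: $|\mathcal{R}_{\text{new}}|\le k$. So the recovery should run in the \emph{forward} order instead.

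Process $t=1,2,\ldots$. When handling row $i=\ell_t$, every column deleted earlier has already been recovered, so $A_{i,j}$ is known for $j\in\mathcal{R}_{\text{old}}$. The row equation is
\[
(\bv{A}\bv{X})_{i,:} = \bv{X}_{\mathcal{R}_{\text{new}},:}^T\,\bv{A}_{i,\mathcal{R}_{\text{new}}} + \bv{X}_{\mathcal{R}_{\text{old}},:}^T\,\bv{A}_{i,\mathcal{R}_{\text{old}}}.
\]
The second term is known and can be subtracted. What remains is a linear system in at most $k$ unknowns with the $k\times|\mathcal{R}_{\text{new}}|$ coefficient matrix $\bv{X}_{\mathcal{R}_{\text{new}},:}^T$. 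This matrix has full column rank with probability 1, because any fixed $k\times r$ Gaussian matrix with $r\le k$ is full rank almost surely. We therefore solve for the unknowns via the pseudoinverse. Columns are handled symmetrically using $\bv{A}^T\bv{Y}$.

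Since there are only finitely many relevant submatrices, a union bound over these measure-zero failure events gives success with probability 1. The procedure is non-adaptive, runs in polynomial time, and uses exactly $2k$ queries.

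The main subtlety is getting the order right, as the aborted first attempt shows. The forward order makes the \emph{unknown} entries of each line exactly those in lines deleted later, and there are at most $k$ of these. We must also check that each entry $A_{i,j}$ is recovered exactly once, namely when the first of row $i$ and column $j$ is processed. After that it is available to the later line, which keeps the induction consistent.
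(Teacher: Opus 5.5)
Your proof is correct and follows essentially the same approach as the paper: peel off lines in degeneracy order, subtract the already-recovered entries from the Gaussian sketches $\bv{A}\bv{X}$ and $\bv{A}^T\bv{Y}$, and solve for the at most $k$ remaining unknowns in each line using the Gaussian submatrix, which has full rank almost surely, with a union bound over finitely many subsets. The only difference is that you process one line at a time in a fixed elimination order rather than in batched rounds, which spares you from handling rows and columns eliminated in the same round. You should also delete the aborted reverse-order paragraph from the write-up.
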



\begin{proof}
    The algorithm that achieves \Cref{lem:recover_ub} is detailed in \Cref{alg:recovery}. The main idea is simple: we will iteratively recover rows and columns of $\bv{A}$ with $\leq \degen(\bv{S})$ unknowns.

\begin{algorithm}[ht]
\caption{Sparse Matrix Recovery}\label{alg:recovery}
    \begin{algorithmic}[1]
    \Require Sparsity pattern $\mathbf{S}\in \{0,1\}^{m\times n}$, matvec query access to a matrix $\mathbf{A} \in \R^{m\times n}$.
    \Ensure $\hat{\bv{A}}$ that is equal to $\bv{A}$ as long as $\bv{A}\circ\bv{S} = \bv{A}$ (i.e., $\bv{A}$ has sparsity pattern $\bv{S}$).
    	\algrule
    \State Compute $d = \degen(\bv{S})$. \Comment{This can be done in $O(\nnz(\bv{S}))$ time \cite{MatulaBeck:1983}.}
    \State Construct $\bv{W}\in \R^{m\times d}$ and $\bv{Z}\in \R^{n\times d}$ with i.i.d. random Gaussian entries.
    \State Compute $\bv{L} \gets \bv{A}^T\bv{W}$ and $\bv{R} \gets \bv{A}\bv{Z}$. \Comment{Requires $2\cdot \degen(\bv{S})$ matvec queries.}
    \State Set $\bv{L}_{init} \gets \bv{L}$ and $\bv{R}_{init} \gets \bv{R}$. Initialize $\hat{\bv{A}}$ as an $m\times n$ matrix of zeros. 
    \While{$\nnz(\bv{S}) > 0$}
        \State Let $\mathcal{R}$ contain the indices of all rows in $\bv{S}$ with $\leq d$ ones. 
        \State Let $\mathcal{C}$ contain the indices of all columns in $\bv{S}$ with $\leq d$ ones.
        \For{$r \in \mathcal{R}$}
            \State Set $\bv{s} \gets \bv{S}_{r,:}$ and set $\bv{G} = \bv{Z}_{\bv{s},:}^T$.
            \State Set $\hat{\bv{A}}_{r,:} \gets \bv{I}_{\bv{s}}^T\bv{G}^+ \bv{R}_{r,:}$. \Comment{$\bv{I}_{\bv{s}}^T$ is an $n\times |\bv{s}|$ extension operator (see \Cref{sec:prelims}).}
            \State Set all entries in row $r$ in $\bv{S}$ equal to $0$.
        \EndFor
        \For{$c \in \mathcal{C}$}
            \State Set $\bv{s} \gets \bv{S}_{:,c}$ and set $\bv{G} = \bv{W}_{\bv{s},:}^T$.
            \State Set $\hat{\bv{A}}_{:,c} \gets \bv{I}_{\bv{s}}^T\bv{G}^+ \bv{L}_{c,:}$.
            \State Set all entries in column $c$ in $\bv{S}$ equal to $0$.
        \EndFor
        \State Update $\bv{L} \gets \bv{L}_{init} - \hat{\bv{A}}^T\bv{W}$ and $\bv{R} \gets \bv{R}_{init} - \hat{\bv{A}}\bv{Z}$. \Comment{Does not require matvec queries.}
    \EndWhile
    \State Return $\hat{\bv{A}}$.
    \end{algorithmic}
\end{algorithm}
    
    Consider rows to start. Suppose we multiply $\bv{A}$ on the right by a random Gaussian matrix $\bv{Z}$ with $d = \degen(\bv{S})$ columns (using $\degen(\bv{S})$ matvec queries). Call the result $\bv{R} = \bv{A}\bv{Z}$. 
    Now, let $r$ be the index of any row in $\bv{S}$ with $\leq d$ ones. Let $\bv{s}$ denote this row. The $r^\text{th}$ row of $\bv{R}$ contains $d$ linear combinations of the entries in row $r$ of $\bv{A}$. Concretely, $\bv{R}_{r,:}^T = [\bv{A}_{r,\bv{s}}]^T\bv{Z}_{\bv{s},:}$. $\bv{Z}_{\bv{s},:}$ is a Gaussian matrix with $d$ columns and $|\bv{s}| \leq d$ rows, so it has full row-rank with probability $1$. We can thus solve for $\bv{A}_{r,\bv{s}}$ by computing $(\bv{Z}_{\bv{s},:}^T)^+\bv{R}_{r,:}$.\footnote{Recall that we take the convention that row and column slices are column vectors. So $\bv{R}_{r,:} \in \R^{d\times 1}$.}
    
    Symmetrically, we can learn the entries in all columns of $\bv{A}$ with $\leq d$ unknowns by multiplying $\bv{A}^T$ by a random Gaussian matrix ($\bv{W}$ in \Cref{alg:recovery}) with $d$ columns.

    Let $R$ and $C$ denote the indices of all rows and columns learned in this first round. We update $\bv{S}$ by zeroing out all entries in these rows and columns. Moreover, we update our current ``guess'' for $\bv{A}$, which is denoted by $\hat{\bv{A}}$, by setting the values in these rows and columns to those that we just learned. At this point, we can iterate on the matrix $\bv{A}-\bv{\hat{\bv{A}}}$ without computing additional matrix-vector products. We simply update $\bv{R}$ by subtracting off $\bv{\hat{A}}\bv{Z}$, which yields $\bv{R} = (\bv{A}-\bv{\hat{A}})\bv{Z}$. Similarly, we update $\bv{L} = \bv{A}^T\bv{W}$ by subtracting off $\bv{\hat{A}}^T\bv{W}$ to yield $\bv{L} = (\bv{A}-\bv{\hat{A}})^T\bv{W}$. From these matvecs, we proceed to learn all rows and columns in $\bv{A}-\bv{\hat{A}}$ with $\leq d$ unknowns. We repeat the process until $\bv{S}$ is all zeros and $\hat{\bv{A}} = \bv{A}$. 

    The correctness of the algorithm rests on the fact that all subsets of $\leq d$ rows from $\bv{Z}$ and $\bv{W}$ that we consider have full row rank. This holds simply because there are only a finite number of such subsets, and each has full row-rank almost surely \cite{Edelman:1988}. The runtime of the algorithm is also clearly polynomial in $n$ and $m$.
\end{proof}


\section{Near-Optimal Sparse Matrix Approximation}
\label{sec:approximation}
We conclude by providing an $\tilde{O}(\degen(\bv{S})/\epsilon)$ query algorithm for the harder problem of finding a near-optimal sparse approximation to an arbitrary matrix $\bv{A}$, which proves \Cref{thm:main_upper}. 
The basic approach is similar to the recovery algorithm from \Cref{sec:recovery}. Using standard tools from randomized numerical linear algebra (specifically, \Cref{lem:sketch}), it is not hard to show that multiplying $\bv{A}$ on the left and right by $O(\degen(\bv{S})/\epsilon)$ random Gaussian vectors suffices to near-optimally approximate any row and column with $\leq \degen(\bv{S})$ non-zeros. 

Once we have done so, a natural approach would be to subtract off our approximation to $\bv{A}$ and iterate on the residual. However, this leads to dependency issues: unlike the recovery setting, where we almost surely learn the exact values of rows and columns, in the approximation setting, our residual is a random matrix that depends on our random Gaussian queries. This prevents a direct argument along the lines of \Cref{lem:recover_ub}, since the same Gaussian queries cannot easily be ``reused'' on the residual matrix.

We deal with this dependency by simply drawing a \emph{fresh} set of random Gaussian vectors at every iteration of the algorithm. Naively, this could be a disastrous idea: it is possible to design sparsity patterns where the number of iterations scales with the matrix dimension. For example, let $\bv{S}$ be the adjacency matrix of a path graph on $n$ nodes. This pattern has degeneracy $1$, and if we iteratively peel off rows/columns with $1$ non-zero, it will take $n/2$ iterations. Our sample complexity will thus scale with $n$, instead of the desired $O(1/\epsilon)$. 

The fix is simple: it is not hard to prove that, if we peel off rows and columns with sparsity $\leq 4\cdot \degen(\bv{S})$ instead of $\leq \degen(\bv{S})$, then we can bound the number of iterations by $O(\log n + m)$. The only cost is a small increase in the number of Gaussian random vectors required per iteration (which scales with the sparsity of the rows and columns learned).

\begin{lemma}
    \label{lem:iteration_bound}
    Consider a sparsity pattern $\bv{S}\in \{0,1\}^{m\times n}$ with degeneracy $d = \degen(\bv{S})$. Suppose we delete all rows and columns in $\bv{S}$ with $\leq 4\cdot d$ ones, and then repeat this process on the remaining matrix. After $\log_2(m+n)$ rounds we are left with an empty matrix.
\end{lemma}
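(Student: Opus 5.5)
The plan is to show that each round of deletion at threshold $4d$ at least halves the total number of surviving rows plus columns. Then after $\log_2(m+n)$ rounds fewer than one row or column remains, so the matrix is empty.

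\textbf{Step 1: an edge-count bound from degeneracy.} Fix any submatrix $\bv{S}'$ of $\bv{S}$, obtained by keeping some set of $a$ rows and $b$ columns. I will show $\nnz(\bv{S}') \leq d\,(a+b)$. View $\bv{S}'$ as a bipartite graph with $a+b$ vertices and $\nnz(\bv{S}')$ edges. Run the peeling process from \Cref{def:degen} with threshold $d$ on $\bv{S}'$, deleting rows and columns one at a time. Degeneracy is monotone under taking submatrices, since any deletion order valid for $\bv{S}$ restricts to one valid for $\bv{S}'$. Hence every row or column, at the moment it is deleted, has at most $d$ ones among the surviving entries. Charging each one to the first of its row and column to be deleted, each line is charged at most $d$ times. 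This gives $\nnz(\bv{S}') \leq d(a+b)$. A little care is needed here: the definition deletes all low-degree lines simultaneously. Doing the deletions sequentially within a round only lowers the remaining degrees, so the charging argument still holds.

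\textbf{Step 2: halving per round.} Consider the matrix remaining at the start of a round, with $N = a+b$ surviving lines. Let $K$ be the set of lines with more than $4d$ ones; these are the ones that survive the round. Each one lies in exactly one row and one column, so summing line degrees gives $\sum_{\ell} \deg(\ell) = 2\nnz \leq 2dN$. Therefore $|K|\cdot 4d < 2dN$, which gives $|K| < N/2$ (when $d \geq 1$; if $d=0$ the matrix is zero and one round empties it).

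\textbf{Step 3: conclude.} The degrees used in Step 2 are those of the current submatrix, and Step 1 applies to every submatrix. So after $t$ rounds fewer than $(m+n)/2^t$ lines survive. Once $t \geq \log_2(m+n)$ this is less than 1, and the matrix is empty. (The constant 4 leaves slack, since threshold $2d$ would suffice for strict halving; the slack can absorb rounding issues in the number of rounds.)

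The main obstacle is Step 1, specifically making rigorous that the degeneracy definition, which deletes lines in simultaneous rounds, yields the sparsity bound $\nnz \leq d(a+b)$ for all submatrices. The rest is a counting argument.
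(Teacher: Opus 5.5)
Your proof is correct and is essentially the paper's argument: you bound $\nnz$ of every surviving submatrix by $d$ times its number of rows plus columns, then average to show that fewer than half of the lines carry more than $4d$ ones, and your Step 1 justifies the $\nnz$ bound that the paper only asserts. Your closing parenthetical is wrong, though: row degrees and column degrees each sum to $\nnz$, so with threshold $2d$ your Step 2 gives only $|K| < N$, which means $4d$ is exactly what this averaging needs rather than slack (and halving genuinely fails at $2d$: take a band $|i-j|\le d-1$, add two extra ones per line, and absorb them $d$ at a time by new rows and columns; this has degeneracy $d$ yet keeps about a $d/(d+2)$ fraction of its lines after one round).
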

\begin{proof}
We first observe that any sparsity pattern with degeneracy $d$ has at most $d\cdot (m+n)$ ones. By an averaging argument, it follows that $< \frac{(m+n)}{4}$ columns have sparsity $> 4\cdot d$. Likewise, $< \frac{(m+n)}{4}$ rows have sparsity $> 4\cdot d$. Accordingly, if we remove all rows and columns with sparsity $\leq 4\cdot d$, we are left with a matrix with dimensions $m',n'$ such that:
\begin{align*}
m' + n' < \frac{(m+n)}{4} + \frac{(m+n)}{4} = \frac{m+n}{2}.
\end{align*}
In other words, the sum of $\bv{S}$' dimensions is reduced by a factor of at least two at every iteration. Hence, we obtain an empty matrix after $\log_2(m+n)$ iterations. 
\end{proof}

With \Cref{lem:iteration_bound} in place, we are ready to present and analyze our main algorithm. We require one additional fact on recovering a sparse approximation to a vector based on inner products with a set of Gaussian random vectors. Reminiscent of results in compressed sensing, this result follows from now standard techniques in randomized numerical linear algebra. See the analysis of Theorem 1 in \cite{AmselChenKeles:2026b} for a complete proof. 
\begin{lemma}
    \label{lem:sketch}
Let $\bv{y} \in \R^n$ be a vector and $\bv{s}\in \{0,1\}^n$ be a sparsity pattern of the same length. Let $\bv{H}\in \R^{q\times n}$ be a matrix with i.i.d. standard Gaussian entries.  Let:
\begin{align*}
    \hat{\bv{y}} = \bv{I}_\bv{s}^T\bv{H}_{:,\bv{s}}^+ \bv{H}\bv{y}, \text{ where $\bv{I}_\bv{s}^T$ is as defined in \Cref{sec:prelims}.}
\end{align*}
Then we have that: 
\begin{align}
    \label{eq:expected_error}
\E\left[\|\hat{\bv{y}} - \bv{s}\circ\bv{y}\|_2^2\right] 
        \leq \frac{|\bv{s}|}{q - |\bv{s}| - 1} \| \bv{y} - \bv{s}\circ \bv{y} \|_2^2.
\end{align}
\end{lemma}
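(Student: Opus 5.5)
The plan is to reduce the estimate to a Gaussian least-squares problem and then apply the standard expected-norm formula for the inverse of a Wishart matrix. Write $k = |\bv{s}|$, $\bv{u} = \bv{I}_\bv{s}\bv{y}\in\R^k$ for the restriction of $\bv{y}$ to the support $\bv{s}$, and $\bv{v} = \bv{I}_{\bar{\bv{s}}}\bv{y}$ for its restriction to the complement $\bar{\bv{s}} = \mathbf{1}-\bv{s}$. Note that $\|\bv{v}\|_2 = \|\bv{y}-\bv{s}\circ\bv{y}\|_2$.

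\textbf{Step 1: decompose the estimate.} Split the sketch as
\[
\bv{H}\bv{y} = \bv{H}_{:,\bv{s}}\bv{u} + \bv{H}_{:,\bar{\bv{s}}}\bv{v}.
\]
When $q\geq k$, the Gaussian matrix $\bv{H}_{:,\bv{s}}$ has full column rank almost surely, so $\bv{H}_{:,\bv{s}}^+\bv{H}_{:,\bv{s}} = \bv{I}$. Hence
\[
\hat{\bv{y}} - \bv{s}\circ\bv{y} = \bv{I}_\bv{s}^T\bv{H}_{:,\bv{s}}^+\bv{H}_{:,\bar{\bv{s}}}\bv{v},
\]
and the extension operator $\bv{I}_\bv{s}^T$ preserves norms, so the error equals $\|\bv{H}_{:,\bv{s}}^+\bv{g}\|_2^2$ with $\bv{g} = \bv{H}_{:,\bar{\bv{s}}}\bv{v}$.

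\textbf{Step 2: use independence of the two column blocks.} The blocks $\bv{H}_{:,\bv{s}}$ and $\bv{H}_{:,\bar{\bv{s}}}$ use disjoint columns, so they are independent, and $\bv{g}\sim\mathcal{N}(0,\|\bv{v}\|_2^2\bv{I}_q)$. Conditioning on $\bv{G}:=\bv{H}_{:,\bv{s}}$ gives
\[
\E\|\bv{G}^+\bv{g}\|_2^2 = \|\bv{v}\|_2^2\,\E\|\bv{G}^+\|_F^2 = \|\bv{v}\|_2^2\,\E\,\mathrm{tr}\bigl((\bv{G}^T\bv{G})^{-1}\bigr),
\]
using $\bv{G}^+(\bv{G}^+)^T = (\bv{G}^T\bv{G})^{-1}$ for full column rank.

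\textbf{Step 3: the inverse Wishart mean.} The matrix $\bv{G}^T\bv{G}$ is a $k\times k$ Wishart matrix with $q$ degrees of freedom and identity scale. Its inverse has mean $\bv{I}/(q-k-1)$ whenever $q>k+1$, so the trace has expectation $k/(q-k-1)$, which yields exactly \eqref{eq:expected_error}. If we want to avoid citing this fact, each diagonal entry $[(\bv{G}^T\bv{G})^{-1}]_{ii}$ equals $1/\|\bv{P}^\perp\bv{G}_{:,i}\|_2^2$, where $\bv{P}^\perp$ projects onto the orthogonal complement of the span of the other $k-1$ columns. The projected norm squared is $\chi^2_{q-k+1}$, and $\E[1/\chi^2_\nu] = 1/(\nu-2)$ gives $1/(q-k-1)$ per entry.

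The main obstacle is Step 3, specifically keeping track of the degrees of freedom precisely enough to get the denominator $q-|\bv{s}|-1$. Steps 1 and 2 are routine once independence is noted. The statement implicitly needs $q>|\bv{s}|+1$ for the expectation to be finite; otherwise the bound is read as vacuous.
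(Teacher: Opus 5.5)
Your proof is correct and takes essentially the same route the paper relies on. The paper gives no proof of its own and defers to the analysis of Theorem 1 in \cite{AmselChenKeles:2026b}, which is exactly this argument: isolate the off-support term $\bv{H}_{:,\bar{\bv{s}}}\bv{v}$, use independence of the disjoint Gaussian column blocks, and apply $\E\|\bv{G}^+\|_F^2 = \E\,\mathrm{tr}\bigl((\bv{G}^T\bv{G})^{-1}\bigr) = |\bv{s}|/(q-|\bv{s}|-1)$. Your computation actually yields equality, and the requirement $q > |\bv{s}|+1$ you flag is met by the paper's choice of $q$, since $|\bv{s}| \leq 4d$ there.
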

\begin{remark}
\Cref{lem:sketch} bounds the difference between our sparse approximation, $\hat{\bv{y}}$, and the optimal approximation to $\bv{y}$ with sparsity pattern $\bv{s}$, which is $\bv{s}\circ \bv{y}$. Observe that, since $\bv{y} -\bv{s}\circ \bv{y}$ and $\bv{\hat{y}} - \bv{s}\circ \bv{y}$ have disjoint support, $\|\hat{\bv{y}} - \bv{s}\circ\bv{y}\|_2^2 + \| \bv{y} - \bv{s}\circ \bv{y} \|_2^2 = \|\bv{y} - \hat{\bv{y}}\|_2^2$. Accordingly, adding $\| \bv{y} - \bv{s}\circ \bv{y} \|_2^2$ to both sides of \eqref{eq:expected_error}, we can see that the bound is equivalent to
\begin{align*}
\E[\|\bv{y} - \hat{\bv{y}}\|_2^2] 
        \leq \left(1 + \frac{|\bv{s}|}{q - |\bv{s}| - 1}\right) \| \bv{y} - \bv{s}\circ \bv{y} \|_2^2.
\end{align*}
I.e., we can extract a $(1+\epsilon)$ near-optimal approximation to  $\bv{y}$ with sparsity pattern $\bv{s}$ from the result of $\bv{y}$'s multiplication with $q = \frac{|\bv{s}|}{\epsilon} + |\bv{s}| + 1 = O\left(\frac{|\bv{s}|}{\epsilon}\right)$ random Gaussian vectors.
\end{remark}
With \Cref{lem:sketch} in place, we proceed to our main result.
\begin{proof}[Proof of \Cref{thm:main_upper}]
    The result of \Cref{thm:main_upper} is obtain using \Cref{alg:sparse_approx} below.







\begin{algorithm}[ht]
\caption{Sparse Matrix Approximation}\label{alg:sparse_approx}
    \begin{algorithmic}[1]
    \Require Sparsity pattern $\mathbf{S}\in \{0,1\}^{m\times n}$, matvec query access to $\mathbf{A} \in \R^{m\times n}$, accuracy parameter $\epsilon \in (0,1)$, failure probability $\delta \in (0,1)$.
    \Ensure Matrix $\tilde{\mathbf B} = \tilde{\mathbf B}\circ \mathbf S$ with sparsity pattern $\bv{S}$ that approximates $\bv{A}$. 
    \algrule
    \State Compute $d \gets \degen(\bv{S})$. Set $q \gets \frac{4d}{\epsilon\delta/2} + 4d + 1$. \Comment{$q = O(d/\epsilon\delta)$.}
    \State Initialize $t \gets 1$ and row/column index sets $\mathcal{R}^{done} = \emptyset, \mathcal{C}^{done} = \emptyset$. 
   \While{$\nnz(\bv{S}) > 0$} \Comment{Preconstruct query vectors.}
        \State Set $\bv{S}^{t,1} \gets \bv{S}$. Let $\mathcal{R}^t$ contain the indices of all rows in $\bv{S}$ with $\leq 4d$ ones. 
        \State Construct $\bv{Z}^t \in \R^{n\times q}$ with i.i.d. Gaussian entries. Set $\bv{Z}^t_{\mathcal{C}^{done},:} \gets \bv{0}$. 
        \State Update $\mathcal{R}^{done} \gets \mathcal{R}^{done} \cup \mathcal{R}^t$, $\bv{S}_{\mathcal{R}^{t},:} \gets \bv{0}$
        \State Set $\bv{S}^{t,2} \gets \bv{S}$. Let $\mathcal{C}^t$ contain the indices of all columns in $\bv{S}$ with $\leq 4d$ ones. 
        \State Construct $\bv{W}^t \in \R^{m\times q}$ with i.i.d. Gaussian entries. Set $\bv{W}^t_{\mathcal{R}^{done},:} \gets \bv{0}$. 
        \State Update $\mathcal{C}^{done} \gets \mathcal{C}^{done} \cup \mathcal{C}^t$, $\bv{S}_{:,\mathcal{C}^{t}} \gets \bv{0}$.
        \State $t \gets t + 1$.
    \EndWhile  
    \State Compute $\bv{L}^i \gets \bv{A}^T\bv{W}^i$ and $\bv{R}^i \gets \bv{A}\bv{Z}^i$ for all $i=1,\ldots,t$. \Comment{Requires $2\cdot t\cdot q$ matvecs.} \label{line:query}

        \State Initialize $\tilde{\mathbf B}$ as an $m\times n$ all zeros matrix. 

        \For{$i=1,\ldots,t$}
            \For{$r \in \mathcal{R}^i$} \label{line:row_approx_loop}
                \State Set $\bv{s} \gets \bv{S}^{i,1}_{r,:}$ and set $\bv{H} = (\bv{Z}^i)^T$.\label{line:get_pattern_row}
                \State Set $\tilde{\bv{B}}_{r,:} \gets \bv{I}_\bv{s}^T\bv{H}_{:,\bv{s}}^+\bv{R}^i_{r,:}\,\,$. \label{line:row_approx}
            \EndFor
            \For{$c \in \mathcal{C}^i$}
                \State Set $\bv{s} \gets \bv{S}^{i,2}_{:,c}$ and set $\bv{H} = (\bv{W}^i)^T$.\label{line:get_pattern_column}
                \State Set $\tilde{\bv{B}}_{:,c} \gets \bv{I}_\bv{s}^T\bv{H}_{:,\bv{s}}^+\bv{L}^i_{c,:}\,\,$. \label{line:column_approx}
            \EndFor
        \EndFor
\State \Return $\tilde{\mathbf B}$
    \end{algorithmic}
\end{algorithm}
We first analyze the complexity of the algorithm. Each iteration of the while loop generates two query matrices with $q$ query vectors each. Moreover, by by \Cref{lem:iteration_bound}, the loop terminates after at most $t = \log_2(n + m)$ iterations. Hence, the total query complexity is $2\cdot q \cdot \log_2(n + m) = O\left(\frac{\degen(\bv{S})}{\epsilon\delta} \cdot \log(m + n)\right)$ matvec queries, as desired. Moreover, all queries are made non-adaptively: the query vectors are chosen by the while loop based on the structure of $\bv{S}$. All queries are then issued together on Line \ref{line:query}. Finally, it is clear that the algorithm runs in time polynomial in $n$ and $m$.

We next bound the approximation error. Note that $\mathcal{R}^i$ and $\mathcal{C}^i$ contain the indices of rows and columns that are processed at iteration $i$ of the main for loop. $\bv{S}^{i,1}$ is a sparsity pattern containing all entries of $\bv{S}$ that have \emph{not yet been learned} before iteration $i$. $\bv{S}^{i,2}$ is equal to $\bv{S}^{i,1}$ after further removing entries learned during iteration $i$ in the row for loop at Line \ref{line:row_approx_loop}.

We further define a mask matrix $\bv{M}^{i,1} \in \{0,1\}^{m\times n}$ so that $\bv{M}^{i,1}_{r,c} = 0$ if $c \in \mathcal{C}^1 \cup \ldots \cup \mathcal{C}^{i-1}$ and $\bv{M}^{i,1}_{r,c} = 1$ otherwise. Define $\bv{M}^{i,2} \in \{0,1\}^{m\times n}$ so that $\bv{M}^{i,1}_{r,c} = 0$ if $r \in \mathcal{R}^1 \cup \ldots \cup \mathcal{R}^{i-1}$ and $\bv{M}^{i,2}_{r,c} = 1$. 
Now, suppose we are at iteration $i$ of the main for loop. By constructing the query matrices $\bv{Z}^i$ and $\bv{W}^i$ with certain rows zero'd out, we can observe that:
\begin{align*}
\bv{R}^i &=  \bv{A}\bv{Z}^i =  (\bv{M}^{i,1} \circ \bv{A})\bar{\bv{Z}}^i & &\text{and} & \bv{L}^i &=  \bv{A}^T\bv{W}^i =  (\bv{M}^{i,2} \circ \bv{A})\bar{\bv{W}}^i, 
\end{align*}
where $\bar{\bv{Z}}^i \in \R^{n\times q}$ and $\bar{\bv{W}}^i \in \R^{m\times q}$ are Gaussian matrices with i.i.d. entries (equal to $\bv{Z}^i$ and $\bv{W}^i$ \emph{before} the rows in $\mathcal{C}^{done}$ and $\mathcal{R}^{done}$ where zero'd out).
Applying \Cref{lem:sketch} to the expression computed on Lines \ref{line:row_approx} and \ref{line:column_approx}, we thus have that, for all $r \in \mathcal{R}^i$ and $c \in \mathcal{C}^i$,
\begin{align}
    \label{eq:main101}
    \E\left[\|(\bv{S}^{i,1} \circ \tilde{\bv{B}})_{r,:} -  (\bv{S}^{i,1}\circ \bv{M}^{i,1} \circ \bv{A})_{r,:}\|_2^2\right] &\leq \frac{\epsilon\delta}{2}  \|(\bv{M}^{i,1} \circ \bv{A})_{r,:} - (\bv{S}^{i,1}\circ \bv{M}^{i,1} \circ \bv{A})_{r,:}\|_2^2,\\
    \E\left[\|(\bv{S}^{i,2} \circ \tilde{\bv{B}})_{:,c} -  (\bv{S}^{i,2}\circ \bv{M}^{i,2} \circ \bv{A})_{:,c}\|_2^2\right] &\leq \frac{\epsilon\delta}{2}  \|(\bv{M}^{i,2} \circ \bv{A})_{:,c} - (\bv{S}^{i,2}\circ \bv{M}^{i,2} \circ \bv{A})_{:,c}\|_2^2. \label{eq:main102}
\end{align}
Observe that the indices with value one in $\bv{S}^{i,1}$ are a subset of those with value one in  $\bv{M}^{i,1}$ since, like $\bv{M}^{i,1}$, $\bv{S}^{i,1}$ has every column in $\mathcal{C}^1 \cup \ldots \cup \mathcal{C}^{i-1}$ set to zero. Likewise, the indices with value one in $\bv{S}^{i,2}$ are a subset of those $\bv{M}^{i,2}$. This allows us to simplify the right hand sides of \eqref{eq:main101} and \eqref{eq:main102} by removing the $\bv{M}^{i,1}$ and $\bv{M}^{i,2}$ terms. Additionally, referring to the right hand sides, we observe that $\bv{S}^{i,1}_{r,:} = \bv{M}^{i,1}_{r,:}\cdot \bv{S}_{r,:}$ and $\bv{S}^{i,2}_{:,2} = \bv{M}^{i,2}_{:,c}\cdot \bv{S}_{:,c}$. We thus have:
\begin{align*}
(\bv{M}^{i,1} \circ \bv{A})_{r,:} - (\bv{S}^{i,1}\circ \bv{M}^{i,1} \circ \bv{A})_{r,:} &= (\bv{M}^{i,1}\circ \bv{A})_{r,:} - (\bv{M}^{i,1}\circ\bv{S}\circ \bv{A})_{r,:} = \bv{A}_{r,:} - (\bv{S}\circ \bv{A})_{r,:} \\
(\bv{M}^{i,2} \circ \bv{A})_{:,c} - (\bv{S}^{i,2}\circ \bv{M}^{i,2} \circ \bv{A})_{:,c} &= (\bv{M}^{i,2}\circ \bv{A})_{:,c} - (\bv{M}^{i,2}\circ  \bv{S}\circ \bv{A})_{:,c} = \bv{A}_{:,c} - (\bv{S}\circ \bv{A})_{:,c}.
\end{align*}
Plugging into \eqref{eq:main101} and \eqref{eq:main102}, we obtain
\begin{align}
    \label{eq:main10}
    \E\left[\|(\bv{S}^{i,1} \circ \tilde{\bv{B}})_{r,:} -  (\bv{S}^{i,1}\circ \bv{A})_{r,:}\|_2^2\right] &\leq \frac{\epsilon\delta}{2}  \|\bv{A}_{r,:} - (\bv{S} \circ \bv{A})_{r,:}\|_2^2,\\
    \E\left[\|(\bv{S}^{i,2} \circ \tilde{\bv{B}})_{:,c} -  (\bv{S}^{i,2}\circ \bv{A})_{:,c}\|_2^2\right] &\leq \frac{\epsilon\delta}{2}  \|\bv{A}_{:,c} - (\bv{S}\circ \bv{A})_{:,c}\|_2^2.
\end{align}
Finally, we bound the total approximation error, $\|\bv{\tilde{B}} - \bv{S}\circ \bv{A}\|_F^2$, by noticing that every non-zero entry in $\bv{\tilde{B}} - \bv{S}\circ \bv{A}$ appears either as a non-zero in $(\bv{S}^{i,1} \circ \tilde{\bv{B}})_{r,:} -  (\bv{S}^{i,1}\circ \bv{A})_{r,:}$ or in $(\bv{S}^{i,2} \circ \tilde{\bv{B}})_{:,c} -  (\bv{S}^{i,2}\circ \bv{A})_{:,c}$ for some value of $i$. We thus have:
\begin{align*}
\E\left[\|\bv{\tilde{B}} - \bv{S}\circ \bv{A}\|_F^2\right] &\leq \frac{\epsilon\delta}{2}\sum_{r=1}^m \|\bv{A}_{r,:} - \bv{S}_{r,:}\circ \bv{A}_{r,:}\|_2^2 + \frac{\epsilon\delta}{2}\sum_{c=1}^n \|\bv{A}_{:,c} - \bv{S}_{:,c}\circ \bv{A}_{:,c}\|_2^2 \\
&\leq \frac{\epsilon\delta}{2}\|\bv{A} - \bv{S}\circ \bv{A}\|_F^2 + \frac{\epsilon\delta}{2}\|\bv{A} - \bv{S}\circ \bv{A}\|_F^2 = \epsilon\|\bv{A} - \bv{S}\circ \bv{A}\|_F^2.
\end{align*} 
By Markov's inequality, it follows that $\|\bv{\tilde{B}} - \bv{S}\circ \bv{A}\|_F^2 \leq \epsilon \|\bv{A} - \bv{S}\circ \bv{A}\|_F^2$ with probability at least $1-\delta$. Finally, since $\bv{\tilde{B}} - \bv{S}\circ \bv{A}$ and $\bv{A} - \bv{S}\circ \bv{A}$ have disjoint support, we can add $\|\bv{A} - \bv{S}\circ \bv{A}\|_F^2$ to both sides of the equation to conclude that, with probability at least $1-\delta$,
\begin{align*}
\|\bv{\tilde{B}} - \bv{S}\circ \bv{A}\|_F^2 + \|\bv{A} - \bv{S}\circ \bv{A}\|_F^2  = \|\bv{A} - \bv{\tilde{B}}\|_F^2  \leq (1+\epsilon)\cdot \|\bv{A} - \bv{S}\circ \bv{A}\|_F^2.
\end{align*}
We conclude by noting that the dependence on $1/\delta$ in \Cref{thm:main_upper} could be improved to logarithmic using standard techniques, like the high-dimensional median trick (see, e.g., \cite{AmselChenKeles:2026b} for details). We omit this improvement for simplicity of presentation.
\end{proof}

\section{Conclusion and Future Directions}
Our work nearly resolves the complexity of fixed-sparsity pattern matrix approximation (and recovery) in the matvec query model, showing that the degeneracy of the sparsity pattern characterizes query complexity. Some questions remain open, however. For example, can we remove the $\log(m+n)$ factor in \Cref{thm:main_upper} so that it matches the lower bound in \Cref{thm:main_lower}? Additionally, is it possible to implement \Cref{alg:sparse_approx} using only ``unstructured'' query vectors (e.g., random Gaussians) instead of the ``masked'' random vectors we currently use?

It is also interesting to ask if our generic results for sparse families can be extended to other classes of matrix families. 
For example, sparse families are a special case of \emph{linearly parameterized} matrix families. A linearly parameterized family consists of all matrices that can be written as $\sum_{i=1}^q c_i B_i$ for a chosen set of coefficients $c_1, \ldots, c_q \in \R$ and fixed basis matrices $B_1, \ldots, B_q \in \R^{m\times n}$ \cite{HalikiasTownsend:2023}. It is know that, in the recovery setting, such matrices can always be learned with $\sqrt{q}$ matvecs \cite{AmselAviChen:2025}. However, many linear families (including sparse families) require fewer, often on the order of $q/n$. Is there a simple parameter of $\{B_1, \ldots, B_q\}$ that governs the query complexity? What about general, non-linear matrix families?

\section{Acknowledgements}
We would like to thank Cameron Musco for helpful discussions. CM is partially supported by NSF Award \#2427363.


\bibliography{refs.bib}

\end{document}